\documentclass{article}

\usepackage{setspace,url}
\usepackage[left=1in,top=1in,right=1in,bottom=1in,dvips,letterpaper]{geometry}
\onehalfspacing
\usepackage[algo2e,linesnumbered,vlined,ruled]{algorithm2e}
\usepackage{graphics,graphicx,epsf,subfigure,epstopdf}
\usepackage{comment}
\usepackage{times}
\usepackage{amsmath,amsxtra,amsfonts,amscd,amssymb,bm}
\usepackage{amsthm}
\usepackage[usenames]{color}
\usepackage[normalem]{ulem}
\usepackage{exscale}

\newtheorem{lemma}{Lemma}[section]
\newtheorem{theorem}[lemma]{Theorem}

\newtheorem{condition}[lemma]{Condition}

\newcommand{\st}{\mbox{s. t.}}
\newcommand{\R}{\mathbb{R}}

\newcommand{\cL}{\mathcal{L}}

\newcommand{\tr}{\mbox{tr}}

\newcommand{\Rnn}{\R^{n \times n}}

\newcommand{\Rnk}{\R^{n \times k}}
\newcommand{\Rkk}{\R^{k \times k}}
\newcommand{\bX}{\bar{X}}
\newcommand{\bY}{\bar{Y}}

\newcommand{\diag}{{\rm diag}}

\newcommand{\Diag}{{\rm Diag}}
\newcommand{\dct}{\mathbf{d_{c2}}}
\newcommand{\dcts}{\mathbf{d^2_{c2}}}
\newcommand{\dctc}{\mathbf{d^3_{c2}}}
\newcommand{\dpt}{\mathbf{d_{p2}}}
\newcommand{\dpts}{\mathbf{d^2_{p2}}}

\newcommand{\zz}{^{\mathrm{T}}}

\newcommand{\inv}{^{-1}}
\newcommand{\ff}{_{\mathrm{F}}}

\newcommand{\mR}{\mathbb{R}}

\newcommand{\Lb}{\Lambda}

\newcommand{\half}{\frac{1}{2}}
\newcommand{\shalf}{\frac{\sqrt{2}}{2}}
\newcommand{\dX}{\Delta X}
\newcommand{\dY}{\Delta Y}

\newcommand{\dH}{\Delta H}
\newcommand{\tX}{\tilde X}

\newcommand{\be}{\begin{equation}}
\newcommand{\ee}{\end{equation}}
\newcommand{\bee}{\begin{equation*}}
\newcommand{\eee}{\end{equation*}}
\newcommand{\bea}{\begin{eqnarray}}
\newcommand{\eea}{\end{eqnarray}}
\newcommand{\beaa}{\begin{eqnarray*}}
\newcommand{\eeaa}{\end{eqnarray*}}

\begin{document}
\title{On the Convergence of the Self-Consistent Field Iteration in
Kohn-Sham Density Functional Theory}


\author{Xin Liu\thanks{State Key Laboratory of Scientific and Engineering Computing, Academy of Mathematics and Systems Science, Chinese Academy of
Sciences, CHINA (liuxin@lsec.cc.ac.cn). Research supported in part
by NSFC grants 11101409 and 11331012, and the National Center for
Mathematics and Interdisciplinary Sciences, CAS.}
\and Xiao Wang\thanks{School of Mathematical Sciences, University of Chinese
Academy of Sciences, CHINA (wangxiao@ucas.ac.cn). Research supported in part by Postdoc Grant 119103S175, UCAS president grant Y35101AY00, and NSFC grant 11301505.}
\and Zaiwen Wen\thanks{Beijing International Center for Mathematical
Research, Peking University, CHINA (wenzw@math.pku.edu.cn).
Research supported in part by NSFC grants 11101274, 11322109 and 91330202.}
\and  Yaxiang Yuan\thanks{State Key Laboratory of Scientific and Engineering Computing,
Academy of Mathematics and Systems Science, Chinese Academy of
Sciences, CHINA (yyx@lsec.cc.ac.cn). Research supported in part
by NSFC grant 11331012.}
}


\maketitle

{\footnotesize
\textbf{Abstract.} It is well known that the self-consistent field (SCF) iteration for
solving the  Kohn-Sham (KS) equation often fails to converge, yet there is no clear
explanation. In this paper, we investigate the SCF iteration from the perspective
of minimizing the corresponding KS total energy functional.  By analyzing  the
second-order Taylor expansion of the KS total energy functional and estimating the
relationship between the Hamiltonian and the part of the Hessian which is not
used in the SCF iteration, we are able to prove global
convergence from an arbitrary initial point and local linear convergence from an
initial point sufficiently close to the solution of the KS equation under assumptions that the gap between the occupied states and unoccupied
states is sufficiently large  and the second-order derivatives of the exchange
correlation functional are uniformly bounded from above. Although these
conditions are very stringent and are almost never satisfied in reality, our analysis is interesting in the sense that it provides a qualitative prediction of the behavior of the SCF iteration.  

\textbf{Key words.} self-consistent field iteration, Kohn-Sham equation, 
Kohn-Sham total energy functional, nonlinear eigenvalue problem, 
global convergence, local convergence rate

\textbf{AMS subject classifications.} 15A18,  65F15, 47J10, 90C30
}

\section{Introduction}

Consider the discretized Kohn-Sham (KS) equation
\be \label{eq:KS} \begin{aligned}  H(X) X &=  X \Lb, \\
  X\zz X &=  I,
  \end{aligned} \ee
where $X\in\Rnk$, the discretized Hamiltonian $H(X)\in\Rnn$ is a matrix function with respect to $X$ such that $H(X)X$ is equal to the
gradient of some discretized total energy functional $E(X)$ (to be defined in section \ref{sec:prob}), 
and $\Lb\in\Rkk$ is a diagonal matrix consisting of $k$ smallest eigenvalues of
$H(X)$. 
The discretized KS equation is a fundamental nonlinear eigenvalue problem arising from the density functional
theory (DFT) for electronic structure calculations
\cite{Martin2004,SzaboOstlund1996}, 
 in which the discretized charge density of electrons is defined as 
 \begin{eqnarray}\label{eq:rho}
\rho(X) := \diag(XX\zz),
\end{eqnarray}
where $\diag(A)$ denotes the vector containing the diagonal elements of the matrix $A$.
If no confusion can arise, we omit the word ``discretized" before ``KS''
and ``charge density'', etc.

The most
widely used approach for solving \eqref{eq:KS}
 is the
self-consistent field (SCF) iteration.
Starting from $X^0$ with $(X^0) \zz X^0=I$, the SCF iteration
computes the $(i+1)$-th iterate $X^{i+1}$ as the solution of the linear eigenvalue
problem:
\be \label{eq:SCF} \begin{aligned}  H(X^i) X^{i+1} &=  X^{i+1} \Lambda^{i+1}, \\
  (X^{i+1})\zz X^{i+1} &=  I.
  \end{aligned} \ee
When the difference between two consecutive Hamiltonians is negligible, the
system is said to be self-consistent and the SCF
procedure is terminated.  Heuristics have been proposed to accelerate and stabilize the SCF iteration. For
example, the charge mixing techniques \cite{Kerker1981,Kresse-1996} replace the Hamiltonian by a new matrix constructed
from a linear
combination of either the
potential or the charge densities computed in the previous SCF iterations and a
 new one obtained from certain schemes.

It is well known that the basic version of SCF iteration \eqref{eq:SCF} often converges slowly or fails to converge \cite{KoutechyBonacic1971} even with the help of various heuristics for decades, yet a clear explanation is not available.
 A convergence analysis  of the SCF
iteration for solving the Hartree-Fock equations according to the optimal
damping algorithm (ODA) is established in \cite{CancesLeBris2000}. The
interested reader is referred to 
 \cite{LeBris2005,Cances2000,Cances2001,Cancesetal2000,Cancesetal2003,Cancesetal2008,Kudinetal2002}
 on discussing ODA and its theoretical properties.
Recently, an analysis of gradient-based algorithms  for the Hartree-Fock equations
is proposed in \cite{Levitt2012} using Lojasiewiscz inequality. 
Some analysis on gradient-based algorithms can also be found in 
\cite{Schneideretall2009}.
In \cite{YangGaoMeza2009}, the authors prove that the sequence
generated by the SCF
iteration converges alternatively to two limit points
 which do not satisfy \eqref{eq:KS} on certain type of problems. A few  
 numerical explanations are provided in \cite{YangMezaWang2007}  
 by viewing the SCF iteration as an indirect
 procedure of minimizing a sequence of
quadratic surrogates.  A  condition is identified in \cite{YangGaoMeza2009}
to guarantee that
the SCF iteration becomes
a contractive fixed point iteration under a specific form of the Hamiltonian
without involving any exchange correlation term.
Basically, the condition characterizes the contribution of the nonlinear component
of the Hamiltonian.

In this paper, we establish some conditions on ensuring global and local
convergence of the SCF iteration for general Kohn-Sham DFT from an optimization point of view.
  Actually, the KS equation \eqref{eq:KS} is closely related to 
  the constrained minimization problem with orthogonality
constraints
\be\label{prob:minKS}  \begin{aligned} \min_{X \in \Rnk} & \quad   E(X)  \\
  \st \; \; &\quad X\zz X = I.
  \end{aligned} \ee
   The first-order optimality conditions for \eqref{prob:minKS} are  
   the same as \eqref{eq:KS} except that the diagonal  matrix $\Lb$ consists of
   any $k$ eigenvalues of $H(X)$ rather than the $k$ smallest ones.
 Assume that the second-order derivative of
  the exchange correlation energy functional is uniformly bounded from above, 
  which implies the Lipschitz
  continuity of the Jacobian of the functional.
Inspired by the expression of the
exact Hessian of $E(X)$ discovered in
 \cite{GaoYangMeza2009,Wen2013},  we observe that the  SCF
 iteration discards a ``complicate'' term in the Hessian of the total energy
 functional $E(X)$. Our analysis shows that this term plays an important role in
 the performance of the SCF scheme \eqref{eq:SCF}.  Briefly speaking, it
  converges if the gap between the $k$th and $(k+1)$st eigenvalues of the
 Hamiltonian $H(X)$ outweighs the norm of the complicate term in the
 Hessian up to some constant.  
 Although this condition is very stringent and is almost never satisfied in practice, which explains why the simplest SCF iteration often does not converge, our presented analysis is  
  interesting theoretically
 in the sense that it provides a qualitative prediction of the behavior of the SCF iteration with respect to the spectral gap of the nonlinear Hamiltonian relative to the Coulomb interaction.

The rest of this paper is organized as follows. In section \ref{sec:prob}, we describe the
total energy functional and its gradient and Hessian, as well as the distance
measurements between subspaces in detail.
The global and local convergence of the SCF iteration are presented in section
\ref{sec:global} and \ref{sec:local}, respectively.  Some relationship to the condition in
\cite{YangGaoMeza2009} is  clarified in section \ref{sec:comparison}.
Finally, we conclude our paper in the last section.

\section{Problem Statement} \label{sec:prob}

\subsection{The KSDFT Total Energy Functional}
Consider the discretized KS total energy
functional based on plane wave discretization  as
\be \label{eq:KS-energy} E(X):=\frac{1}{4}\tr(X\zz L X) + \half\tr(X\zz V_{ion}  X)
 + \half \sum_i \sum_l |x_i\zz w_l|^2 + \frac{1}{4} \rho^\top
L^\dagger \rho + \half e\zz \epsilon_{xc}(\rho), \ee
where $X=[x_1, \ldots, x_k] \in \Rnk$. 
The first term of \eqref{eq:KS-energy} is the so-called kinetic energy,
where  $L$ is a finite dimensional representation of the Laplacian operator. The
second term denotes local ionic potential energy, where the diagonal matrix
$V_{ion}$ is the ionic
pseudopotentials sampled on the suitably chosen Cartesian grid.  The third term  defines the nonlocal ionic potential
energy, where $w_l$ represents a discretized pseudopotential reference projection
function. The matrix $L^\dagger$
 corresponds to the pseudo-inverse of $L$ and the fourth term denotes the Hartree
potential energy, which is used to model the classical electrostatic average
interaction between electrons. The final term denotes the exchange correlation
energy, which is used to describe the nonclassical   interaction between
electrons. More detailed description of each terms of $E(X)$ can  be found in
\cite{YangMezaLeeWang2009,YangMezaWang2007}. Although the function  
\eqref{eq:KS-energy} is  can be different if other basis functions, such as Gaussian
atomic orbitals,   are used for the
discretization, our analysis still holds with some obvious modifications.

It can be verified that the gradient of $E(X)$ with respect to $X$ is
 $\nabla E(X) = H(X)X$,
where the Hamiltonian
 \be \label{eq:H} H(X):= \half L + V_{ion} + \sum_l w_l w_l\zz +
\Diag(  L^\dagger \rho) + \Diag( \mu_{xc}(\rho)\zz e),\ee
 and $ \mu_{xc}(\rho) :=\frac{ \partial \epsilon_{xc}}{\partial \rho} \in \R^{n \times
 n}$ and $\Diag(x)$ (with an uppercase letter $D$) denotes a diagonal matrix
with $x$ on its diagonal.
 Let $\cL(\Rnk,\Rnk)$ denote the space of linear operators which map $\Rnk$ to
$\Rnk$. The Fr\'echet derivative of $\nabla E(X)$ is defined as the (unique)
function $\nabla^2 E: \Rnk \to \cL(\Rnk, \Rnk)$ such that
\[
\lim_{\|S\|\ff \to 0 } \frac{\| \nabla E(X+S) -\nabla E(X) - \nabla^2 E(X)
(S)\|\ff }{\|S\|\ff } =0.
\]
The next lemma shows an explicit form of the Hessian operator
\cite{GaoYangMeza2009,Wen2013}.
 \begin{lemma}[Lemma 2.1 in \cite{Wen2013}] \label{lemma:Hessian}Suppose that $\epsilon_{xc}(\rho(X))$ is twice
   differentiable with respect to $\rho(X)$.
Given a direction $S \in \Rnk$, the Hessian-vector product of $E(X)$ is
\be \label{eq:secondorderdef} \nabla^2E(X)[S] =   H(X) S  +B(X)[S], \ee
where $ J :=  L^\dagger + \frac{\partial^2  \epsilon_{xc}}{ \partial \rho ^2 }
e$ and
\be\label{eq:secondpartdef}
B(X)[S] := 2 \Diag \left(J \diag(S  X \zz)  \right) X .
\ee
\end{lemma}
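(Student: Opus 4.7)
The plan is to differentiate the gradient formula $\nabla E(X)=H(X)X$ directly, using the product rule and then the chain rule through $\rho(X)=\diag(XX\zz)$, which is the only route by which $H(X)$ depends on $X$.

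First I would write, by the product rule,
\[
\nabla^2 E(X)[S] \;=\; H(X)\,S \;+\; \bigl(DH(X)[S]\bigr)\,X,
\]
so that the entire task reduces to evaluating the directional derivative of the Hamiltonian. Inspecting the definition of $H(X)$ in \eqref{eq:H}, the terms $\tfrac12 L$, $V_{ion}$, and $\sum_l w_l w_l\zz$ are constant in $X$ and contribute nothing. Only $\Diag(L^\dagger \rho)$ and $\Diag(\mu_{xc}(\rho)\zz e)$ remain, and both depend on $X$ solely through $\rho(X)$.

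Next I would compute the derivative of the charge density. Since $\rho(X)=\diag(XX\zz)$, a direct expansion gives
\[
D\rho(X)[S] \;=\; \diag(SX\zz + XS\zz) \;=\; 2\,\diag(SX\zz),
\]
using that $\diag(A)=\diag(A\zz)$. Applying the chain rule to the two $X$-dependent summands of $H(X)$ yields
\[
D\bigl[\Diag(L^\dagger\rho)\bigr][S] = 2\,\Diag\!\bigl(L^\dagger \diag(SX\zz)\bigr),
\]
\[
D\bigl[\Diag(\mu_{xc}(\rho)\zz e)\bigr][S] = 2\,\Diag\!\Bigl(\tfrac{\partial^2\epsilon_{xc}}{\partial\rho^2}e\cdot\diag(SX\zz)\Bigr),
\]
where in the second line one recognizes that the Jacobian of $\mu_{xc}(\rho)\zz e = (\partial\epsilon_{xc}/\partial\rho)\zz e$ with respect to $\rho$ is precisely $\frac{\partial^2\epsilon_{xc}}{\partial\rho^2}e$ under the stated twice-differentiability hypothesis. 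Summing the two contributions and factoring produces $DH(X)[S] = 2\,\Diag\!\bigl(J\,\diag(SX\zz)\bigr)$ with $J = L^\dagger + \frac{\partial^2\epsilon_{xc}}{\partial\rho^2}e$, so that $\bigl(DH(X)[S]\bigr)X = B(X)[S]$ as claimed.

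Finally, I would verify that this formula really is the Fr\'echet derivative in the sense of the definition preceding the lemma, not merely a G\^ateaux derivative. The only nontrivial point is that $\epsilon_{xc}$ is assumed twice differentiable, so its Jacobian $\mu_{xc}$ is continuously differentiable; hence both $\Diag(L^\dagger\rho(X))$ and $\Diag(\mu_{xc}(\rho(X))\zz e)$ are $C^1$ in $X$, the quadratic term $\rho(X)=\diag(XX\zz)$ is polynomial, and $H(X)X$ is a product of $C^1$ maps. Standard Taylor expansion then gives the $o(\|S\|\ff)$ remainder required by the Fr\'echet definition. The main potential obstacle is purely bookkeeping: ensuring the $\Diag/\diag$ conventions, the transpose on $\mu_{xc}$, and the identification of $\frac{\partial^2\epsilon_{xc}}{\partial\rho^2}e$ as a matrix-vector map acting on the vector $\diag(SX\zz)$ all align so that the two contributions combine cleanly into the single operator $J$.
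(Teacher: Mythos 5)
Your derivation is correct: the product rule reduces everything to $DH(X)[S]$, the only $X$-dependence of $H$ is through $\rho(X)=\diag(XX\zz)$ with $D\rho(X)[S]=2\diag(SX\zz)$, and the two chain-rule contributions combine into $2\Diag(J\diag(SX\zz))X=B(X)[S]$. The paper itself gives no proof (it cites Lemma 2.1 of \cite{WenMilzarekUlbrichZhang}), and your argument is the standard derivation used there.
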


We make the following assumptions on the total energy function.
\begin{condition}\label{cond:exc}
The second-order derivatives of the exchange correlation functional
 $\epsilon_{xc}(\rho)$ is uniformly bounded from above, which implies the Lipschitz
   continuity of its Jacobian.
 Without loss of generality, 
 we assume that there exists a constant $\sigma$ such that 
 \[ \left\|\Diag(\mu_{xc}(\rho)\zz e) - \Diag(\mu_{xc}(\tilde \rho)\zz e)
 \right\|\ff \le \sigma \|\rho -
 \tilde \rho\|_2  \mbox{~ and ~}
 \left\| \frac{\partial^2  \epsilon_{xc}}{ \partial \rho ^2 } e \right\|_2 \le
 \sigma,\quad \mbox{ for all }\rho\in\mR^{n} . \]
\end{condition}

 We next consider the second part of the Hessian operator $B(X)[S]$ defined in \eqref{eq:secondpartdef}.

\begin{lemma}\label{lm:sndorder} Suppose that Condition \ref{cond:exc} holds.
  Let $X\in\mathcal{O}^{n\times k}$, $Z\in\mathcal{O}^{n\times (n-k)}$ and $S \in \Rnk$. Then 
\begin{eqnarray}\label{eq:secondpartdef1}
  \| B(X)[S]\|\ff &\leq& 2 \sqrt{n} (\|L^\dagger\|_2+ \sigma)\cdot \|S\|_2, \\
\label{eq:secondpartdef2}
\|Z\zz B(X)[ZZ\zz S]\|\ff &\leq& 2 \sqrt{n} (\|L^\dagger\|_2+ \sigma)\cdot \|Z\zz S\|_2.
\end{eqnarray}
\end{lemma}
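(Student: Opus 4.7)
The plan is to reduce both inequalities to a single underlying bound on the operator $S \mapsto \Diag(Jd(S))X$, where I abbreviate $d(S) = \diag(SX\zz) \in \R^{n}$, and then exploit orthonormality to turn this into a spectral-norm statement. The first step will be to bound the ``coefficient matrix'' $J$. Since $J = L^\dagger + \frac{\partial^2 \epsilon_{xc}}{\partial \rho^2} e$, a direct application of Condition \ref{cond:exc} together with the triangle inequality gives $\|J\|_2 \le \|L^\dagger\|_2 + \sigma$. This isolates the only place where Condition \ref{cond:exc} enters the proof.

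Next, I would control $\|d(S)\|_2$. Each diagonal entry satisfies $|d(S)_i| = |e_i\zz S X\zz e_i| \le \|SX\zz\|_2$, so $\|d(S)\|_2 \le \sqrt{n}\,\|SX\zz\|_2 \le \sqrt{n}\,\|S\|_2\|X\|_2 = \sqrt{n}\,\|S\|_2$, using that $X$ has orthonormal columns and therefore $\|X\|_2 = 1$. Combining this with the standard inequalities $\|AB\|\ff \le \|A\|\ff \|B\|_2$ and $\|\Diag(v)\|\ff = \|v\|_2$, I get
\[
\|B(X)[S]\|\ff = 2\,\|\Diag(Jd(S))\,X\|\ff \le 2\,\|Jd(S)\|_2\,\|X\|_2 \le 2\,\|J\|_2\,\|d(S)\|_2,
\]
which yields \eqref{eq:secondpartdef1} directly after substituting the bounds on $\|J\|_2$ and $\|d(S)\|_2$.

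For \eqref{eq:secondpartdef2}, the same reduction applies with $S$ replaced by $ZZ\zz S$, but now I also have a left multiplication by $Z\zz$. The key observation is that $\|Z\zz\,\Diag(v)\,X\|\ff \le \|Z\|_2\,\|\Diag(v)\|\ff\,\|X\|_2 = \|v\|_2$, since both $X$ and $Z$ have unit spectral norm. Hence
\[
\|Z\zz B(X)[ZZ\zz S]\|\ff \le 2\,\|J\|_2\,\|d(ZZ\zz S)\|_2.
\]
Then the same uniform-bound argument as before yields $\|d(ZZ\zz S)\|_2 \le \sqrt{n}\,\|ZZ\zz S X\zz\|_2$, and submultiplicativity together with $\|Z\|_2 = \|X\|_2 = 1$ reduces this to $\sqrt{n}\,\|Z\zz S\|_2$, giving \eqref{eq:secondpartdef2}.

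The main obstacle, modest as it is, is the bookkeeping of which norms to use at each step: the argument must alternate between Frobenius and spectral norms in exactly the right places so that the orthonormality of $X$ and $Z$ absorbs the would-be dimension-dependent factors and only the single $\sqrt{n}$ coming from the entry-by-entry bound on $d(\cdot)$ survives. Verifying in the second inequality that the $Z$-projection $ZZ\zz S$ is controlled by $\|Z\zz S\|_2$ rather than by $\|S\|_2$ is what makes \eqref{eq:secondpartdef2} sharper than simply applying \eqref{eq:secondpartdef1} to $ZZ\zz S$, and this is the step that requires the most care.
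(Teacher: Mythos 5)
Your proof is correct and follows essentially the same route as the paper's: the identical chain of norm estimates ($\|\Diag(v)\|\ff=\|v\|_2$, submultiplicativity with $\|X\|_2=\|Z\|_2=1$, the entrywise bound giving the single $\sqrt{n}$, and $\|ZZ\zz SX\zz\|_2\le\|Z\zz S\|_2$), with the only cosmetic difference being that you state the triangle-inequality bound $\|J\|_2\le\|L^\dagger\|_2+\sigma$ explicitly while the paper leaves it implicit.
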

\begin{proof}
We only prove the second inequality. Using $\|Z\zz\|_2 \le 1$ and $\|X\|_2=1$, we obtain
\beaa
\| Z\zz B(X)[ZZ\zz S]\|\ff &=& \|2 Z\zz \Diag(J \diag(ZZ\zz SX\zz))X\|\ff \nonumber\\
&\leq& 2\|Z\zz \|_2 \|\Diag(J \diag(ZZ\zz SX\zz))\|\ff\|X\|_2 \nonumber\\
&\leq& 2 \|\Diag(J \diag(ZZ\zz SX\zz))\|\ff = 2  \|J \diag(ZZ\zz SX\zz)\|_2 \nonumber\\
&\leq& 2 \|J\|_2\cdot \|\diag(ZZ\zz SX\zz)\|_2 \leq 2 \|J\|_2\cdot\sqrt{n} \|ZZ\zz SX\zz\|_\infty \nonumber\\
&\leq& 2 \sqrt{n} \|J\|_2\cdot \|ZZ\zz SX\zz\|_2 \leq 2 \sqrt{n} \|J\|_2\cdot
\|Z \zz S\|_2,
\nonumber
\eeaa
where the last inequality uses the fact that $\|Z M\|_2\leq \|M\|_2$ for any
matrix $M \in \R^{k \times k}$. This completes the proof.
\end{proof}

Our analysis also relies on the gap between the $k$th and $(k+1)$st eigenvalues of $H(X)$.

\begin{condition}
\label{cond:UWP}
Let  $\lambda_1 \le \ldots \le \lambda_k < \lambda_{k+1} \le \ldots \le
\lambda_n$ be the eigenvalues of a symmetric matrix $H \in \R^{n\times n}$.
There exists a gap between the $k$th and $(k+1)$st eigenvalues, that is, $\lambda_{k+1}-\lambda_k \ge \delta$ for some positive constant $\delta$.
\end{condition}
If Condition \ref{cond:UWP} holds for a sequence of matrices $\{H^i\}$
($i=1,2,...$) whose $\delta$ is uniformly bounded away from zero, $\{H^i\}$ is
said  to be uniformly well posed (UWP) in \cite{LeBris2005,YangGaoMeza2009}.

\subsection{Distance Measurements}
The SCF iteration maintains orthogonality in each iteration. The feasible set
\[ \mathcal{O}^{n\times k}:=\{X\mid X\in\Rnk, X\zz X=I\}\] is often referred to
as the Stiefel manifold.
The solutions of the KS equation \eqref{eq:KS}, the SCF iteration \eqref{eq:SCF}
and the minimization problem \eqref{prob:minKS} are invariant with respect
to orthogonal transformations. Namely, if $X$ is a solution,
 all points in the set $\{XU\mid U\in\mR^{k\times k},\,U\zz U=I_k\}$ are
 also solutions.
Hence, the Euclidean distance is not suitable to measure
the distance between a feasible point to a solution or a solution set of
\eqref{eq:KS}.  Inspired by the convergence analysis in \cite{YangGaoMeza2009}, we introduce two subspaces
distance measurements defined in section 4.3 of \cite{EdelmanAriasSmith1998} for
further analysis, i.e., for any $X_1,X_2\in\mathcal{O}^{n\times k}$,
\begin{enumerate}
\item {\bf Chordal 2-norm:\quad}  $\dct(X_1,X_2):=\min\limits_{Q_1,Q_2\in\mathcal{O}^{k\times k}} \|X_1Q_1-X_2Q_2\|_2$;
\item {\bf Projection 2-norm:\quad}  $\dpt(X_1,X_2):=\|X_1X_1\zz-X_2X_2\zz\|_2$.
\end{enumerate}
Let $U\Sigma V\zz$ be the singular value decomposition of $X_1\zz X_2$. It
 holds that
\begin{eqnarray}\label{eq:dcteq}
\dct(X_1,X_2)=\|X_1U-X_2V\|_2.
\end{eqnarray}

Since the equivalence
between $\dct$ and $\dpt$ is not discussed in
\cite{EdelmanAriasSmith1998}, we next include a proof for completeness.
\begin{lemma}\label{Rel:3}Given any $X_1,X_2\in\mathcal{O}^{n\times k}$, the Chordal 2-norm and Projection 2-norm satisfy
\begin{eqnarray}\label{eq:Rel}
\dct(X_1,X_2)  \geq \dpt(X_1,X_2) \geq \shalf\dct(X_1,X_2).
\end{eqnarray}
\end{lemma}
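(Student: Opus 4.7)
The plan is to reduce both distances to simple expressions in the singular values of $X_1\zz X_2$, then compare them algebraically. Let $X_1\zz X_2 = U\Sigma V\zz$ be the SVD with $\Sigma = \diag(s_1,\ldots,s_k)$ and $1 \ge s_1 \ge \cdots \ge s_k \ge 0$ (the upper bound coming from $\|X_1\zz X_2\|_2 \le \|X_1\|_2\|X_2\|_2 = 1$). Using the already-established formula \eqref{eq:dcteq}, the product $(X_1U-X_2V)\zz(X_1U-X_2V)$ collapses to
\[
 X_i\zz X_i = I \;\Longrightarrow\; (X_1U - X_2V)\zz(X_1U - X_2V) = 2I - 2\Sigma,
\]
whose operator $2$-norm is $2(1-s_k)$. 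Hence $\dct^2(X_1,X_2) = 2(1-s_k)$.

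For $\dpt$, I would extend $X_1$ to an orthogonal matrix $[X_1,X_1^\perp]$ with $X_1^\perp\in\mathcal{O}^{n\times(n-k)}$, and set $A = X_1\zz X_2$, $B = (X_1^\perp)\zz X_2$, which satisfy $A\zz A + B\zz B = I_k$. In this basis the projections become
\[
P_1 = \begin{pmatrix} I & 0 \\ 0 & 0 \end{pmatrix},\qquad
P_2 = \begin{pmatrix} AA\zz & AB\zz \\ BA\zz & BB\zz \end{pmatrix}.
\]
A direct block multiplication, using $A(I-A\zz A) = (I-AA\zz)A$, shows that the off-diagonal blocks of $(P_1-P_2)^2$ cancel and
\[
(P_1-P_2)^2 = \begin{pmatrix} I - AA\zz & 0 \\ 0 & BB\zz \end{pmatrix}.
\]
The nonzero eigenvalues of $I-AA\zz$ are $\{1-s_i^2\}_{i=1}^k$, and those of $BB\zz$ coincide with the nonzero eigenvalues of $B\zz B = I - A\zz A$, which are the same set. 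Therefore $\dpt^2(X_1,X_2) = \|P_1 - P_2\|_2^2 = 1 - s_k^2$.

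With these two identities, the two claimed bounds collapse to elementary algebra:
\[
\dct^2 - \dpt^2 = 2(1-s_k) - (1-s_k^2) = (1-s_k)^2 \ge 0,
\qquad
2\dpt^2 - \dct^2 = 2s_k(1-s_k) \ge 0,
\]
the second inequality using $s_k\in[0,1]$. Squaring \eqref{eq:Rel} and rearranging gives exactly these two relations, completing the proof. I expect the main obstacle to be the block computation of $(P_1-P_2)^2$: one must carry the identity $A\zz A + B\zz B = I$ through the multiplication to see both that the off-diagonal blocks vanish and that the two diagonal blocks share the same nonzero spectrum $\{1-s_i^2\}$; everything else is straightforward SVD bookkeeping.
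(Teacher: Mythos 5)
Your proof is correct and follows essentially the same route as the paper: both arguments reduce $\dct$ and $\dpt$ to the singular values of $X_1\zz X_2$, arriving at the identities $\dcts = 2(1-s_k)$ and $\dpts = 1-s_k^2$ and finishing with elementary algebra in $s_k\in[0,1]$. The only cosmetic differences are that you establish $\dpts=1-s_k^2$ by an explicit block computation of $(P_1-P_2)^2$ where the paper invokes $\dpt(X_1,X_2)=\|Z_2\zz X_1\|_2$ from Golub--Van Loan, and you obtain the first inequality from the same singular-value identities where the paper uses a separate positive-semidefiniteness argument.
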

\begin{proof}
We first consider the first inequality in \eqref{eq:Rel}.
Let us denote $\bX_1 = X_1 U$ and $\bX_2 = X_2 V$, where $U$ and $V$ are defined in
\eqref{eq:dcteq}. Then, we observe
\begin{eqnarray}
0&\preceq& (I_k-\bX_1\zz \bX_2)(I_k-\bX_2\zz \bX_1)
= I - \bX_1\zz \bX_2 - \bX_2\zz \bX_1 + \bX_1\zz \bX_2\bX_2\zz \bX_1 \nonumber\\
& = &  (2I_k - \bX_1\zz \bX_2 - \bX_2\zz \bX_1) - (I_k -  \bX_1\zz \bX_2\bX_2\zz
\bX_1), \notag
\end{eqnarray}
which yields
\begin{eqnarray}\label{eq:2norm}
\sigma_{\max}(I_k -  \bX_1\zz \bX_2\bX_2\zz \bX_1) \leq
\sigma_{\max}(2I_k - \bX_1\zz \bX_2 - \bX_2\zz \bX_1).
\end{eqnarray}
Let $Z_2\in\mathcal{O}^{n\times (n-k)}$ be the orthogonal
complement to $X_2$. The left hand side of \eqref{eq:2norm} satisfies
\begin{eqnarray}\label{eq:lft}
\sigma_{\max}(I_k -  \bX_1\zz \bX_2\bX_2\zz \bX_1) &=&
\sigma_{\max}(\bX_1\zz(I_k-\bX_2\bX_2\zz)\bX_1)
= \sigma_{\max}(\bX_1\zz Z_2Z_2\zz \bX_1)\nonumber\\
& = & \|Z_2\zz \bX_1\|_2^2 = \dpts(\bX_1,\bX_2) = \dpts(X_1,X_2),
\end{eqnarray}
where the last equality holds due to Theorem 2.6.1 of \cite{Golub1996}.
 It follows from \eqref{eq:dcteq} that the right hand side of \eqref{eq:2norm} satisfies
\begin{eqnarray}\label{eq:rgt}
\sigma_{\max}(2I_k - \bX_1\zz \bX_2 - \bX_2\zz \bX_1) = \|\bX_1 - \bX_2\|_2^2
= \dcts(X_1,X_2),
\end{eqnarray}
which together with \eqref{eq:lft} proves the first part of \eqref{eq:Rel}.

We now prove the second inequality of \eqref{eq:Rel}.
According to \eqref{eq:lft} and the definitions of $U$ and $V$, we obtain
\begin{eqnarray}\label{eq:lft2}
\dpts(X_1,X_2) = \sigma_{\max}(I_k -  \bX_1\zz \bX_2\bX_2\zz \bX_1)
= \sigma_{\max} (I_k - \Sigma^2).
\end{eqnarray}
It follows from \eqref{eq:rgt} that
\begin{eqnarray}\label{eq:rgt2}
\dcts(X_1,X_2) = \sigma_{\max}(2I_k - \bX_1\zz \bX_2 - \bX_2\zz \bX_1)
= \sigma_{\max} (2I_k - 2\Sigma).
\end{eqnarray}
Since $X_1$ and $X_2$ are orthogonal matrices, each diagonal entry of the
diagonal matrix $\Sigma$ is in $[0,1]$. The proof is completed by combining \eqref{eq:lft2}
and \eqref{eq:rgt2} together.
\end{proof}

Theorem 4.11 in \cite{Stewart1973} and Corollary 7.2.5 in \cite{Golub1996}
 are sufficient to guarantee the convergence of the invariant subspaces corresponding
to the $k$-smallest eigenvalues.
\begin{lemma}\label{lm:sensitivity}
Suppose that the symmetric matrix $H\in\mR^{n\times n}$ satisfies Condition
\ref{cond:UWP}. Let 
$\dH\in \mR^{n\times n}$ be a symmetric perturbation to $H$ and  $X,\tX\in\mR^{n\times
k}$ be the invariant subspaces
 associated with the $k$ smallest eigenvalues of $H$ and $H+\dH$, respectively. If $||\dH||_2$ is
 sufficiently small, it holds that
 \begin{eqnarray}
 \dpt(X,\tX) \leq C\cdot ||\dH||_2,
 \end{eqnarray}
 where $C$ is a parameter only related to $\delta$ in Condition
\ref{cond:UWP}.
\end{lemma}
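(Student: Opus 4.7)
The plan is to apply the classical $\sin\Theta$ perturbation theorem for invariant subspaces (exactly the results cited in the statement, Theorem 4.11 of Stewart 1973 and Corollary 7.2.5 of Golub--Van Loan 1996), and then to translate its conclusion into the $\dpt$ notation of the paper via the same identity that is worked out inside the proof of Lemma \ref{Rel:3}. First I would set up the residual of $X$ as an approximate invariant subspace of the perturbed matrix: writing $\Lb_X := X\zz H X = \diag(\lambda_1,\dots,\lambda_k)$ and letting $Z\in\mathcal{O}^{n\times(n-k)}$ be any orthogonal complement to $X$, define
\[
R \;:=\; (H+\dH)X - X\bigl(X\zz(H+\dH)X\bigr) \;=\; (I - XX\zz)\,\dH\, X,
\]
which immediately gives $\|R\|_2 \le \|\dH\|_2$.

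Second, I would control the relevant gap. Let $\delta'$ denote the separation between the spectrum of the $k\times k$ Rayleigh quotient $X\zz(H+\dH)X$ and the remaining $n-k$ eigenvalues of $H+\dH$. Applying Weyl's inequality once to $H+\dH$ and once to $X\zz(H+\dH)X = \Lb_X + X\zz \dH X$ shows that each of these two spectra lies within $\|\dH\|_2$ of the corresponding spectrum of $H$, hence $\delta' \ge \delta - 2\|\dH\|_2$, and in particular $\delta' \ge \delta/2$ as soon as $\|\dH\|_2 \le \delta/4$. The $\sin\Theta$ theorem then yields
\[
\|Z\zz \tX\|_2 \;\le\; \frac{\|R\|_2}{\delta'} \;\le\; \frac{2\,\|\dH\|_2}{\delta}.
\]

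To finish, I would rerun the chain of equalities \eqref{eq:lft} in the proof of Lemma \ref{Rel:3} with the roles $(X_1,X_2)=(\tX,X)$ (so that $Z_2=Z$); this gives the identity $\dpt(X,\tX)=\|Z\zz \tX\|_2$, and combining with the previous display produces $\dpt(X,\tX) \le (2/\delta)\,\|\dH\|_2$. Thus the lemma holds with $C=2/\delta$, a constant depending only on $\delta$, provided $\|\dH\|_2 \le \delta/4$. The main (in fact only) non-routine step is the gap-preservation argument for $\delta'$: one has to guarantee that the perturbed gap does not collapse and that the resulting constant depends on $\delta$ alone, which is exactly what Weyl's inequality delivers. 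Once that is in place, the rest is a direct quotation of the classical invariant-subspace perturbation bound plus the subspace-distance identity already contained in Lemma \ref{Rel:3}.
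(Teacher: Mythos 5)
Your proof is correct and takes essentially the same route as the paper: the paper offers no argument at all for this lemma beyond citing Stewart's Theorem 4.11 and Golub--Van Loan's Corollary 7.2.5, and your write-up is exactly the standard application of those $\sin\Theta$ perturbation bounds. The details you supply --- the residual estimate $\|(I-XX\zz)\dH X\|_2\le\|\dH\|_2$, the Weyl-inequality argument showing the perturbed gap stays above $\delta/2$ for $\|\dH\|_2\le\delta/4$, and the identity $\dpt(X,\tX)=\|Z\zz\tX\|_2$ drawn from the computation in Lemma~\ref{Rel:3} --- are all sound and yield the admissible constant $C=2/\delta$.
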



\section{Global Convergence of the SCF Iteration}
\label{sec:global}
In this section, we prove global convergence of the SCF iteration based on the
reduction of the total energy functional between two consecutive iterates.
 Suppose that $X\in\mathcal{O}^{n\times k}$
 is an arbitrary feasible point of \eqref{prob:minKS}, and  $Y$ is
  obtained from
 running one SCF iteration with $X$ as the starting point. Namely, the columns of $Y$ are the eigenvectors associated with the $k$
smallest eigenvalues of $H(X)$.
Such a $Y$ is not unique because the linear eigenvalue problem is invariant with respect to
the orthogonal transformation.
Let $U\Sigma V\zz$ be the singular value
decomposition of $X\zz Y$, where $U,V\in\mathcal{O}^{k\times k}$. Then it follows from \eqref{eq:dcteq} that
$\bY:= YVU\zz$ satisfies
\begin{eqnarray}\label{eq:YChordaleq}
\|X-\bY\|_2 = \dct(X,Y).
\end{eqnarray}
Due to the invariance, 
$\bY$ is also a solution to the linear eigenvalue problem in the SCF iteration starting from $X$ and  $E(Y)=E(\bY)$.
For simplicity of notation, we call $\bY$ as the closest SCF iterate obtained
from $X$ under the Chordal 2-norm.

The second-order Taylor expansion of $E(Y)$ at $X$ gives
\begin{eqnarray}\label{eq:Y-X1}
E(Y) & = & E(X) + \langle \nabla E(X), Y-X \rangle + \frac{1}{2}\langle \nabla^2 E(D_t)[Y-X], Y-X \rangle \notag ,
\end{eqnarray}
where $D_t = X + t(Y-X)$ for some $t\in(0,1)$, and 
 the Euclidean inner product $\langle A_1,A_2\rangle$ between any real matrices
 $A_1,A_2\in\mR^{n\times k}$ is defined as $\tr(A_1^TA_2)$. Using the formulations of the
gradient $\nabla E(X)=H(X)X$ and the Hessian-vector product \eqref{eq:secondorderdef}, we obtain
\bea
 E(X) - E(Y)
&=&   -\langle \nabla E(X), Y-X \rangle - \frac{1}{2}\langle \nabla^2
E(X)[Y-X], Y-X \rangle \nonumber \\
 & & -\frac{1}{2}\langle \nabla^2 E(D_t)[Y-X], Y-X \rangle + \frac{1}{2}\langle
 \nabla^2 E(X)[Y-X], Y-X \rangle \nonumber \\
\label{eq:Y-X2}
 &=&   \frac{1}{2}(\langle H(X)X, X \rangle - \langle H(X)Y,Y \rangle )
 -R^{(1)}_X(Y,D_t) - R^{(2)}_X(Y,D_t),
 \eea
where
 \bea\label{eq:Y-X3}
R^{(1)}_X(Y,D_t) &:=&\frac{1}{2}\langle (H(D_t)-H(X))(Y-X), Y-X \rangle,\\
\label{eq:Y-X4}
R^{(2)}_X(Y,D_t) &:=& \frac{1}{2}\langle B(D_t)[Y-X], Y-X \rangle.
\eea
The first term of the right hand side in \eqref{eq:Y-X2} corresponds to a
reduction of a quadratic form of the linear eigenvalue problem in the SCF iteration.  Lemma 1 in \cite{YangGaoMeza2009} ensures
the following reduction.
\begin{lemma}\label{lm:citation}
  Suppose that Condition \ref{cond:UWP} holds at $H(X)$, and $Y$ is a solution
  obtained from running one SCF iteration with $X$ as the starting point. Then we have
\begin{eqnarray}\label{eq:YangIneq}
\langle H(X)X, X \rangle - \langle H(X)Y,Y \rangle   \ge \delta \cdot \dpts(X,Y).
\end{eqnarray}
\end{lemma}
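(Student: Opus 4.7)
The plan is to work in the eigenbasis of $H(X)$ and reduce the claim to simple accounting of where the ``mass'' of $X$ sits relative to the spectral gap.

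First I would diagonalize $H(X) = Q\Lambda Q\zz$ with $Q = [Y,Z] \in \mathcal{O}^{n\times n}$, where $Y$ is the SCF iterate (its columns are the eigenvectors for $\lambda_1,\ldots,\lambda_k$) and $Z$ is any orthonormal basis for the orthogonal complement. By construction, $\langle H(X)Y,Y\rangle = \sum_{i=1}^k \lambda_i$. For the other inner product, set $W := Q\zz X$, which inherits orthonormal columns from $X$, and rewrite
\begin{eqnarray*}
\langle H(X)X, X\rangle \;=\; \tr(\Lambda W W\zz) \;=\; \sum_{i=1}^n \lambda_i\, \gamma_i,
\quad \gamma_i := (WW\zz)_{ii}.
\end{eqnarray*}
Since $WW\zz$ is an orthogonal projection of rank $k$, its diagonal entries satisfy $\gamma_i \in [0,1]$ and $\sum_i \gamma_i = k$.

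Next, I would introduce the ``leaked mass'' $t := \sum_{i>k}\gamma_i$; the preceding identities force $\sum_{i\le k}(1-\gamma_i)=t$ as well. Then
\begin{eqnarray*}
\langle H(X)X,X\rangle - \langle H(X)Y,Y\rangle
  &=& -\sum_{i\le k}\lambda_i(1-\gamma_i) + \sum_{i>k}\lambda_i\gamma_i \\
  &\ge& -\lambda_k\, t + \lambda_{k+1}\, t \;=\; (\lambda_{k+1}-\lambda_k)\, t \;\ge\; \delta\, t,
\end{eqnarray*}
using $\lambda_i\le\lambda_k$ in the first sum, $\lambda_i\ge\lambda_{k+1}$ in the second, and Condition \ref{cond:UWP}. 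This is the conceptual heart of the argument. I would be careful here with signs since the $\lambda_i$ need not be positive; however, the comparison is valid in either sign regime because both $(1-\gamma_i)$ and $\gamma_i$ are non-negative, so multiplying by a larger (respectively smaller) $\lambda$ does what one wants.

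Finally, I need to upgrade $t$ to $\dpts(X,Y)$. The lower $(n-k)\times k$ block of $W$ is exactly $Z\zz X$, hence $t = \|Z\zz X\|\fs \ge \|Z\zz X\|_2^2$, and the latter equals $\dpts(X,Y)$ by Theorem 2.6.1 of \cite{Golub1996}, the identity already invoked in \eqref{eq:lft}. Chaining these inequalities yields the advertised bound $\langle H(X)X,X\rangle - \langle H(X)Y,Y\rangle \ge \delta\, \dpts(X,Y)$. The only nontrivial obstacle, and it is a small one, is the careful sign bookkeeping in the two-sided eigenvalue estimate; the remainder is direct algebraic rearrangement and an invocation of a standard subspace-distance identity.
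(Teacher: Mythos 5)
Your proof is correct. Note that the paper does not prove this lemma itself but simply cites Lemma 1 of \cite{YangGaoMeza2009}; your argument is a valid self-contained reconstruction of that result, and it follows the standard route: diagonalize $H(X)$, observe that $\langle H(X)X,X\rangle=\sum_i\lambda_i\gamma_i$ with $\gamma_i$ the diagonal of a rank-$k$ orthogonal projection, bound the difference below by $\delta$ times the leaked mass $t=\sum_{i>k}\gamma_i=\|Z\zz X\|\fs$, and finish with $\|Z\zz X\|\fs\ge\|Z\zz X\|_2^2=\dpts(X,Y)$. All the sign bookkeeping and the subspace-distance identity are handled correctly, so there is nothing to fix.
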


We next estimate $R^{(1)}_X(Y,D_t)$ and
$R^{(2)}_X(Y,D_t)$ for the reduction of $E(X) - E(Y)$.

\begin{lemma}\label{lm:Red}
Suppose that Condition \ref{cond:exc} holds. 
Let $X$ be an orthogonal matrix with $H(X)$ satisfying Condition \ref{cond:UWP},
and $Y$ be a solution obtained from running one SCF iteration with $X$ as the starting point. Then
\begin{eqnarray}\label{bound_EX_EY}
E(X) - E(Y) &\geq& \frac{1}{2}\delta\cdot \dpts(X,Y)
 -  k\sqrt{n}( \|L^\dagger\|_2 + \sigma)\cdot (\dcts(X,Y) + \dctc(X,Y)).
\end{eqnarray}
\end{lemma}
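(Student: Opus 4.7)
The approach is to exploit the second-order Taylor decomposition \eqref{eq:Y-X2}--\eqref{eq:Y-X4} already set up in the excerpt and to estimate the three pieces separately. Since both $Y$ and its rotation $\bY = YVU\zz$ solve the same linear eigenvalue problem at $X$ and $E(Y)$ is invariant under orthogonal right-multiplication, I would work with $\bY$ throughout so that \eqref{eq:YChordaleq} gives $\|\bY - X\|_2 = \dct(X,Y)$; the Taylor identity is then evaluated at $D_t = (1-t)X + t\bY$, which automatically satisfies $\|D_t\|_2 \le 1$ as a convex combination of matrices of unit spectral norm. The leading piece $\tfrac12(\langle H(X)X, X\rangle - \langle H(X)\bY, \bY\rangle)$ is handled directly by Lemma \ref{lm:citation}, delivering the $\tfrac{\delta}{2}\dpts(X,Y)$ contribution, so the remaining task is to upper-bound $|R^{(1)}_X(\bY, D_t)|$ and $|R^{(2)}_X(\bY, D_t)|$.

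For $R^{(2)}$, I would combine Cauchy--Schwarz with Lemma \ref{lm:sndorder} applied at $D_t$ in place of $X$. Inspecting the proof of that lemma one sees it only uses $\|X\|_2 \le 1$, so the same argument yields $\|B(D_t)[\bY - X]\|\ff \le 2\sqrt{n}(\|L^\dagger\|_2 + \sigma)\,\|\bY - X\|_2$; combined with $\|\bY - X\|\ff \le \sqrt{k}\,\dct(X,Y)$ this gives $|R^{(2)}| \le \sqrt{nk}(\|L^\dagger\|_2+\sigma)\,\dcts(X,Y)$, which fits inside the target $\dcts$-coefficient. For $R^{(1)}$, I would first expand $H(D_t) - H(X)$ using \eqref{eq:H}: only the Hartree and exchange-correlation terms depend on $X$, giving $H(D_t) - H(X) = \Diag(L^\dagger \Delta\rho) + \Diag((\mu_{xc}(\rho(D_t)) - \mu_{xc}(\rho(X)))\zz e)$ with $\Delta\rho = \rho(D_t) - \rho(X)$. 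Using $\|\Diag(v)\|_2 \le \|v\|_2$ for the Hartree diagonal and Condition \ref{cond:exc} together with $\|\cdot\|_2 \le \|\cdot\|\ff$ for the exchange-correlation diagonal produces $\|H(D_t) - H(X)\|_2 \le (\|L^\dagger\|_2 + \sigma)\|\Delta\rho\|_2$. Then the identity $D_tD_t\zz - XX\zz = t(X(\bY-X)\zz + (\bY-X)X\zz) + t^2(\bY-X)(\bY-X)\zz$ and the bound $\|\Delta\rho\|_2 \le \sqrt{n}(2\dct + \dcts)$ plugged into $|R^{(1)}| \le \tfrac12 \|H(D_t)-H(X)\|_2 \|\bY - X\|\fs$ with $\|\bY - X\|\fs \le k\dcts$ give an estimate of the form $\tfrac{k\sqrt{n}}{2}(\|L^\dagger\|_2+\sigma)(2\dctc + \dct^4)$.

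The main technical subtlety I anticipate is the passage of Lemma \ref{lm:sndorder} from the orthogonal case to the interior matrix $D_t$, but since the proof only needs $\|X\|_2 \le 1$ the adaptation is immediate. The residual quartic $\dct^4$ term in the $R^{(1)}$ estimate is collapsed into $\dctc$ using $\dct(X,Y) \le \sqrt{2}$, which follows from \eqref{eq:rgt2} since each diagonal entry of $\Sigma$ lies in $[0,1]$. Assembling the three estimates and consolidating constants then yields the stated lower bound with the coefficient $k\sqrt{n}(\|L^\dagger\|_2 + \sigma)$ in front of $\dcts(X,Y) + \dctc(X,Y)$.
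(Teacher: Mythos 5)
Your proposal follows the paper's own proof almost step for step: the same passage to the rotated iterate $\bY$ with $\|X-\bY\|_2=\dct(X,Y)$, the same use of Lemma \ref{lm:citation} for the leading term, and the same separate estimates of $R^{(1)}_X$ and $R^{(2)}_X$ via Condition \ref{cond:exc} and Lemma \ref{lm:sndorder}. Your observation that Lemma \ref{lm:sndorder} extends to $D_t$ because its proof only needs $\|D_t\|_2\le 1$ is exactly the point the paper relies on, and your $R^{(2)}$ estimate reproduces \eqref{eq:R2}.

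The one place where your argument does not deliver the inequality as stated is the bound on $\|D_tD_t\zz-XX\zz\|_2$ inside the $R^{(1)}$ estimate. Your symmetric expansion $D_tD_t\zz-XX\zz=t(X(\bY-X)\zz+(\bY-X)X\zz)+t^2(\bY-X)(\bY-X)\zz$ yields $2\dct(X,Y)+\dcts(X,Y)$, and the resulting fourth-order remainder $\tfrac{k\sqrt n}{2}(\|L^\dagger\|_2+\sigma)\,\dcts(X,Y)^2$ cannot simply be ``collapsed'' into the cubic term: using $\dct(X,Y)\le\sqrt2$ it contributes an extra $\tfrac{\sqrt2}{2}\,k\sqrt n(\|L^\dagger\|_2+\sigma)\dctc(X,Y)$, so your coefficient of $\dctc(X,Y)$ comes out as $(1+\tfrac{\sqrt2}{2})\,k\sqrt n(\|L^\dagger\|_2+\sigma)$ rather than the claimed $k\sqrt n(\|L^\dagger\|_2+\sigma)$, and the spare factor $\sqrt k$ versus $k$ in your $R^{(2)}$ bound is not enough to absorb the difference (it vanishes entirely for $k=1$). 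The fix is the asymmetric grouping the paper uses in \eqref{eq:tobeimproved}: write $XX\zz-D_tD_t\zz=X(X-D_t)\zz+(X-D_t)D_t\zz$, so that $\|XX\zz-D_tD_t\zz\|_2\le 2\|X-D_t\|_2\le 2\|\bY-X\|_2$ with no quadratic remainder. With that replacement your $R^{(1)}$ estimate becomes exactly $k\sqrt n(\|L^\dagger\|_2+\sigma)\dctc(X,Y)$ and the lemma follows with the stated constants; the constants matter here because they propagate into the threshold \eqref{bound_alpha} of Theorem \ref{thm:global}.
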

\begin{proof}
 Let $\bY$ be the closest SCF iterate obtained from $X$ under the Chordal 2-norm.
Using the facts that the second term of the left hand side in
\eqref{eq:YangIneq} is invariant with respect to orthogonal transformation on $Y$ and $\dpt(X,Y)=\dpt(X,\bY)$, we obtain
\begin{eqnarray}\label{eq:Q}
\langle H(X)X,X \rangle - \langle H(X)\bY,\bY \rangle \;\ge \; \delta \cdot\dpts(X,\bY).
\end{eqnarray}

 Simple calculations show that
\begin{eqnarray}\label{eq:tobeimproved}
\|XX\zz-D_tD_t\zz\|_2 
                    &  \le &  2\|X - D_t\|_2 \;\le \; 2\|\bY-X\|_2.
\end{eqnarray}
The definition of $H(X)$, Condition \ref{cond:exc} and the inequality \eqref{eq:tobeimproved}
give
\bea
 && \|H(D_t) - H(X)\|_F \notag \\
& = & \|\mbox{Diag}(L^\dagger(\rho(X) - \rho(D_t)))\|_F +
\|\Diag( \mu_{xc}(\rho(X))\zz e) -\Diag( \mu_{xc}(\rho(D_t))\zz e) \|_F \notag\\
                    & \le & ( \|L^\dagger\|_2 + \sigma) \|\rho(X) - \rho(D_t)\|_2 \notag \\
                    & \le & \sqrt{n} ( \|L^\dagger\|_2 + \sigma)  \|\mbox{diag}(XX\zz) - \mbox{diag}(D_tD_t\zz)\|_{\infty} \notag \\
                    & \le & \sqrt{n} ( \|L^\dagger\|_2 + \sigma)  \|XX\zz - D_tD_t\zz\|_2 \notag \\
                    & \le & 2\sqrt{n} ( \|L^\dagger\|_2 + \sigma)  \|\bY-X\|_2,
                    \notag \label{H_xi}
\eea
which further yields
\begin{eqnarray}\label{eq:R1}
R^{(1)}_X(\bY, D_t) &\leq& \left|\frac{1}{2}\langle (H(D_t)-H(X))(\bY-X), \bY-X\rangle
\right| \nonumber\\
&\leq& \frac{1}{2} \|H(D_t)-H(X)\|\ff \|\bY-X\|_2 \|\bY-X\|\ff \nonumber\\
&\leq& k\sqrt{n}( \|L^\dagger\|_2 + \sigma) \|\bY-X\|_2^3.
\end{eqnarray}

It follows from \eqref{eq:secondpartdef1} in Lemma \ref{lm:sndorder} that
\beaa
\langle B(D_t)[\bY-X], \bY-X\rangle
& \le & \|B(D_t)[\bY-X]\|\ff\|\bY-X\|\ff \nonumber \\
& \le & 2 \sqrt{n}\|J\|_2\|D_t(\bY-X)\zz\|_2\cdot k\cdot\|\bY-X\|_2 \nonumber \\
& \le & 2 k \sqrt{n}( \|L^\dagger\|_2 + \sigma) \|\bY-X\|_2^2,
\eeaa
where the last inequality is implied by $\|D_t\|_2 = \|X + t(\bY-X)\|_2\le 1$.
Consequently, we have
\begin{eqnarray}\label{eq:R2}
R^{(2)}_X(\bY, D_t)\leq \left|\frac{1}{2}\langle B(D_t)[\bY-X], \bY-X \rangle
\right|\;\le\;  k \sqrt{n} ( \|L^\dagger\|_2 + \sigma)  \|\bY-X\|_2^2.
\end{eqnarray}

Substituting \eqref{eq:Q}, \eqref{eq:R1} and \eqref{eq:R2} into \eqref{eq:Y-X2}, we obtain
\begin{eqnarray}
E(X) - E(\bY) &\geq& \frac{1}{2}\delta\cdot \dpts(X,\bY)
 -  k\sqrt{n} ( \|L^\dagger\|_2 + \sigma)
( \|X-\bY\|_2^2 + \|X-\bY\|_2^3).
\end{eqnarray}
Finally, the inequality \eqref{bound_EX_EY} is proved by using \eqref{eq:YChordaleq}, $\dpt(X,Y)=\dpt(X,\bY)$ and $E(Y)=E(\bY)$.
\end{proof}


We now present our global convergence results based on the reduction of the total energy functioanl in Lemma \ref{lm:Red} and the relationship between
the distance measurements in Lemma \ref{Rel:3}.

\begin{theorem}\label{thm:global}
Suppose that Condition \ref{cond:exc} holds. 
Let $\{X^i\}$ be a sequence generated by the SCF iteration such that
$\{H(X^i)\}$ is uniformly well posed with a constant $\delta$. 
Then $\{X^i\}$ converges to a solution to the KS equation  \eqref{eq:KS}, if
\begin{equation}\label{bound_alpha}\delta > 12 k\sqrt{n}( \|L^\dagger\|_2 +
\sigma).
\end{equation}
\end{theorem}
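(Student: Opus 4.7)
The plan is to show that the SCF iteration is a strict descent method for the total energy $E$ whenever $\delta$ dominates the ``nuisance'' Hessian term, and then exploit compactness of the Stiefel manifold to extract a limit point satisfying the KS equation.

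First, I would combine Lemma~\ref{lm:Red} with the equivalence in Lemma~\ref{Rel:3} so as to re-express the right hand side purely in terms of $\dpts(X,Y)$. Because any two points $X,Y\in\mathcal{O}^{n\times k}$ satisfy $\dct(X,Y)=\|XU-YV\|_2\le\|XU\|_2+\|YV\|_2\le 2$, one obtains $\dctc(X,Y)\le 2\,\dcts(X,Y)$, hence $\dcts(X,Y)+\dctc(X,Y)\le 3\,\dcts(X,Y)\le 6\,\dpts(X,Y)$, where the last step uses $\dcts\le 2\,\dpts$ from Lemma~\ref{Rel:3}. Substituting this into Lemma~\ref{lm:Red} with $X=X^i$, $Y=X^{i+1}$ yields
\[
E(X^i)-E(X^{i+1})\;\ge\; \left(\tfrac{\delta}{2}-6k\sqrt{n}(\|L^\dagger\|_2+\sigma)\right)\dpts(X^i,X^{i+1}).
\]
The hypothesis $\delta>12k\sqrt{n}(\|L^\dagger\|_2+\sigma)$ makes the bracketed coefficient strictly positive; call it $c>0$. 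Hence $\{E(X^i)\}$ is monotonically decreasing, and since $E$ is continuous on the compact set $\mathcal{O}^{n\times k}$ it is bounded below, so $\{E(X^i)\}$ converges. Telescoping gives $\sum_{i\ge 0}\dpts(X^i,X^{i+1})<\infty$, and in particular $\dpt(X^i,X^{i+1})\to 0$.

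Next I would identify the limit. By compactness of $\mathcal{O}^{n\times k}$, some subsequence $X^{i_j}$ converges in Frobenius norm to some $X^\ast\in\mathcal{O}^{n\times k}$. Continuity of $H$ gives $H(X^{i_j})\to H(X^\ast)$, and the UWP assumption transfers a uniform spectral gap to $H(X^\ast)$. Since $X^{i_j+1}$ is the invariant subspace of the $k$ smallest eigenvalues of $H(X^{i_j})$, and $\dpt(X^{i_j},X^{i_j+1})\to 0$, the subspace perturbation bound in Lemma~\ref{lm:sensitivity} forces $X^\ast$ to span the invariant subspace of the $k$ smallest eigenvalues of $H(X^\ast)$, i.e., $X^\ast$ solves the KS equation~\eqref{eq:KS}.

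The main obstacle is the final identification step: eigenbases are only determined up to a $k\times k$ orthogonal transformation, so one cannot expect Euclidean convergence of the entire sequence $\{X^i\}$, and ``convergence to a solution'' must be interpreted in the subspace sense (e.g.\ via $\dpt$, or via Chordal-closest representatives $\bY=YVU\zz$ as in \eqref{eq:YChordaleq}). One must verify that any two accumulation points produce the same invariant subspace, which follows from $\dpt(X^i,X^{i+1})\to 0$ combined with the uniform eigenvalue gap and the continuity of $H(\cdot)$ guaranteed by Condition~\ref{cond:exc}. A minor technical point is to check that the constant $12$ is indeed tight with the bounds used; tracking where $\dct(X,Y)\le 2$ is invoked versus the sharper $\dct\le\sqrt{2}$ bound explains the choice of $12$ rather than $4(1+\sqrt{2})$.
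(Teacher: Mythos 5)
Your argument follows the paper's proof essentially step for step: combine Lemma~\ref{lm:Red} with Lemma~\ref{Rel:3} and the crude bound $\dct(X^i,X^{i+1})\le 2$ to obtain a strict energy decrease whose coefficient is positive exactly when $\delta>12k\sqrt{n}(\|L^\dagger\|_2+\sigma)$, telescope to force the consecutive subspace distances to zero, and identify the limit via the perturbation bound of Lemma~\ref{lm:sensitivity}. The only differences are cosmetic: you keep the descent estimate in terms of $\dpts$ where the paper converts to $\dcts$ (both yield the same constant $12$), and you are, if anything, more careful than the paper about the passage from $\dpt(X^i,X^{i+1})\to 0$ to convergence of the whole sequence.
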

\begin{proof}\quad It follows from Lemma \ref{Rel:3} and Lemma \ref{lm:Red}
  that, for any $i=1,2,...$,
\begin{eqnarray}\label{eq:gthm1}
E(X^{i}) - E(X^{i+1}) & \ge & \left(\frac{1}{4}\delta - k\sqrt{n}( \|L^\dagger\|_2 + \sigma)
\right)\dcts(X^{i},X^{i+1}) \notag \\
 & & -  k \sqrt{n}( \|L^\dagger\|_2 + \sigma)
 \dctc(X^i,X^{i+1}).
\end{eqnarray}
Since $X^{i}$ and $X^{i+1}$ are both orthogonal matrices, we have
\begin{eqnarray}\label{eq:gthm2}
\dct(X^i,X^{i+1}) \le \|X^i\|_2 + \|X^{i+1}\|_2 =2.
\end{eqnarray}
Substituting \eqref{eq:gthm2} into \eqref{eq:gthm1}, we obtain
\begin{equation}\label{red}
E(X^i) - E(X^{i+1})\; \ge \; (\frac{1}{4}\delta - 3k\sqrt{n}( \|L^\dagger\|_2 +
\sigma))\dcts(X^i,X^{i+1}).
\end{equation}
By summing  \eqref{red} over all indices from $0$ to $i$, we obtain
\be \label{eq:sum-E} E(X^{i+1}) \le E(X^0) - (\frac{1}{4}\delta - 3k\sqrt{n}( \|L^\dagger\|_2 +
\sigma)) \sum_{j=0}^{i} \dcts(X^i,X^{i+1}). \ee
Since
$E(X^i)$ is bounded  below, we have that $E(X^0)-E(X^{i+1}) $ is less than some
 positive constant for all $i$. Hence, by taking limits in \eqref{eq:sum-E}, we
 have 
\begin{eqnarray}\label{eq:gthmfinal}
 \lim\limits_{i\rightarrow\infty} \dct(X^i,X^{i+1})  = 0.
\end{eqnarray}
Namely, $\{X^i\}$ converges.\; Let
\be
X^*:=\lim\limits_{i\rightarrow\infty} X^i,
\ee
and $\tX$ be consisted of the eigenvectors associated with the $k$
smallest eigenvalues of $H(X^*)$.
It follows from Lemma \ref{lm:sensitivity} that
\begin{eqnarray}
\dpt(X^{i+1},\tX) \leq C\cdot ||H(X^i)-H(X^*)||_2.
\end{eqnarray}
Taking limit on both sides and using the continuity of $H(X)$,
we obtain
\begin{eqnarray}
0\leq \dpt(X^*,\tX) = \lim\limits_{i\rightarrow\infty}\dpt(X^{i+1},\tX)
\leq \lim\limits_{i\rightarrow\infty} C\cdot ||H(X^i)-H(X^*)||_2 = 0.
\end{eqnarray}
Namely, $X^* = \tX$, which completes the proof.
\end{proof}

Theorem \ref{thm:global} guarantees the convergence of the SCF iteration to
a solution of the KS equation, which is more than the first-order optimality conditions for \eqref{prob:minKS}. In fact, when the inequality \eqref{bound_alpha} holds, 
the reduction of the total energy \eqref{red} implies that any global minimizer of \eqref{prob:minKS} is a solution
of the KS equation.

\section{Local Convergence of the SCF Iteration} \label{sec:local}

In this section, we establish local convergence of the SCF iteration by
exposing the relationship between two consecutive iterates in terms of
their distances to a particular solution of \eqref{eq:KS}. The
results are called local analysis since it
relies on the Taylor expansion in a small neighborhood of
that optimal solution.

\begin{lemma}\label{order}
Suppose that Conditions \ref{cond:exc} holds.
Let $X^*$ be a solution to the KS equation \eqref{eq:KS} whose $H(X^*)$
satisfies Condition \ref{cond:UWP},
$X\in\mathcal{O}^{n\times k}$ be in a sufficiently small neighborhood of $X^*$, and $Y$
be a solution obtained from running one SCF iteration with $X$ as the starting point. 
Then $\dpt(X^*,Y)$ is of the same order of $\dpt(X^*,X)$, namely
\begin{eqnarray} \label{eq:equiv-d}
\dpt(X^*,Y)=O(\dpt(X^*,X)).
\end{eqnarray}
\end{lemma}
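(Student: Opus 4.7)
The plan is to connect $Y$ and $X^*$ through the sensitivity result for invariant subspaces (Lemma \ref{lm:sensitivity}) and then control the perturbation of the Hamiltonian by the distance between $X$ and $X^*$. Since $X^*$ solves \eqref{eq:KS}, its columns span the invariant subspace of $H(X^*)$ associated with the $k$ smallest eigenvalues; by definition of the SCF step, $Y$ spans the analogous invariant subspace of $H(X)$. Thus $X^*$ and $Y$ are the ``unperturbed'' and ``perturbed'' invariant subspaces of the pair $H(X^*)$ and $H(X)=H(X^*)+\Delta H$ with $\Delta H = H(X)-H(X^*)$.

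First I would invoke Lemma \ref{lm:sensitivity} with $H=H(X^*)$ and $\Delta H = H(X)-H(X^*)$. Condition \ref{cond:UWP} at $H(X^*)$ is assumed in the statement, and by continuity of $H(\cdot)$, the hypothesis that $\|\Delta H\|_2$ is sufficiently small is guaranteed as soon as $X$ lies in a sufficiently small neighborhood of $X^*$. Lemma \ref{lm:sensitivity} then yields
\[
\dpt(X^*,Y) \;\le\; C \, \|H(X) - H(X^*)\|_2,
\]
for a constant $C$ depending only on $\delta$.

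Next I would bound $\|H(X)-H(X^*)\|_2$ in terms of $\dpt(X,X^*)$, mimicking the computation already carried out inside the proof of Lemma \ref{lm:Red}. From the definition \eqref{eq:H}, the only $X$-dependent pieces of $H$ are $\Diag(L^\dagger \rho)$ and $\Diag(\mu_{xc}(\rho)\zz e)$. Combining the Lipschitz bound of Condition \ref{cond:exc} with the elementary estimate $\|L^\dagger(\rho(X)-\rho(X^*))\|_2 \le \|L^\dagger\|_2\cdot\|\rho(X)-\rho(X^*)\|_2$ gives
\[
\|H(X)-H(X^*)\|_2 \;\le\; \|H(X)-H(X^*)\|\ff \;\le\; (\|L^\dagger\|_2 + \sigma)\,\|\rho(X)-\rho(X^*)\|_2.
\]
Then, using $\rho(X)=\diag(XX\zz)$, the routine chain $\|\diag(A)\|_2 \le \sqrt{n}\|\diag(A)\|_\infty \le \sqrt{n}\|A\|_2$ produces
\[
\|\rho(X)-\rho(X^*)\|_2 \;\le\; \sqrt{n}\,\|XX\zz - X^*(X^*)\zz\|_2 \;=\; \sqrt{n}\,\dpt(X,X^*).
\]

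Stringing these three inequalities together delivers
\[
\dpt(X^*,Y) \;\le\; C\sqrt{n}\,(\|L^\dagger\|_2 + \sigma)\,\dpt(X,X^*),
\]
which is exactly \eqref{eq:equiv-d}. There is no real obstacle here: the only subtle point is making sure that the neighborhood of $X^*$ in which $X$ lives is small enough for Lemma \ref{lm:sensitivity} to apply, but this is automatic from the continuity of $H(\cdot)$ and the fact that the bound just derived shows $\|H(X)-H(X^*)\|_2 \to 0$ as $X \to X^*$.
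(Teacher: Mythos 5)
Your proof is correct and follows essentially the same route as the paper: apply Lemma \ref{lm:sensitivity} to the pair $H(X^*)$ and $H(X)=H(X^*)+\Delta H$, then control $\|H(X)-H(X^*)\|_2$ via the Lipschitz estimates of Condition \ref{cond:exc}. Your version is in fact slightly more careful, since you bound the perturbation directly by $\sqrt{n}(\|L^\dagger\|_2+\sigma)\,\dpt(X,X^*)$ (exploiting that $H$ depends on $X$ only through $XX\zz$), whereas the paper's one-line proof stops at $O(\|X-X^*\|_2)$ and leaves the identification with $O(\dpt(X^*,X))$ implicit.
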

\begin{proof}
Using the continuity of $H(X)$, 
 the fact that $X$ is in a sufficiently small neighborhood of $X^*$ and Lemma
 \ref{lm:sensitivity}, we obtain 
\begin{eqnarray}\label{eq:InvErr}
\dpt(X^*,Y)\leq C\cdot ||H(X)-H(X^*)||_2 = O(||X-X^*||_2),
\end{eqnarray}
which proves \eqref{eq:equiv-d}.
\end{proof}

\begin{theorem}\label{thm:local}
Suppose that Conditions \ref{cond:exc} holds.
Let $X^*$ be a solution to the KS equation \eqref{eq:KS} whose $H(X^*)$
satisfies Condition \ref{cond:UWP},
$X$ be in a sufficient small neighborhood of $X^*$, 
and $Y$ be a solution obtained from running one SCF iteration with $X$ as the starting point.
Then
\begin{eqnarray}
\dpt(X^*,Y) \leq \frac{2 \sqrt{n} (\|L^\dagger\|_2+ \sigma)}{\delta}\cdot \dpt(X^*,X) + O(\dpts(X^*,X)).
\end{eqnarray}
\end{theorem}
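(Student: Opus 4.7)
The plan is to invoke the invariant-subspace perturbation result Lemma \ref{lm:sensitivity} with $H = H(X^*)$ and $\dH := H(X) - H(X^*)$, then close the loop by controlling $\|\dH\|_2$ by $\dpt(X^*, X)$. Since $X^*$ solves the KS equation \eqref{eq:KS} and $H(X^*)$ satisfies Condition \ref{cond:UWP}, the columns of $X^*$ span the invariant subspace of $H(X^*)$ associated with the $k$ smallest eigenvalues; by construction $Y$ spans the analogous subspace of $H(X)$. Therefore Lemma \ref{lm:sensitivity} applies as soon as $\|\dH\|_2$ is small enough, which is guaranteed in a small neighborhood of $X^*$ by the continuity of $H$.

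The first technical step is a Lipschitz estimate for $H$ with respect to the projection 2-norm. Only the Hartree and exchange--correlation parts of $H$ in \eqref{eq:H} depend on $X$, and both are governed by $\rho$. Reusing the chain of inequalities already exploited in the proof of Lemma \ref{lm:Red}, namely
\[
\|\rho(X)-\rho(X^*)\|_2 \;\le\; \sqrt{n}\,\|\mathrm{diag}(XX\zz - X^*(X^*)\zz)\|_\infty \;\le\; \sqrt{n}\,\dpt(X^*,X),
\]
combined with the Lipschitz estimate of the exchange--correlation potential from Condition \ref{cond:exc}, I expect to obtain a linear bound of the form
\[
\|\dH\|_2 \;\le\; \sqrt{n}\,(\|L^\dagger\|_2+\sigma)\,\dpt(X^*,X).
\]

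Composing this with Lemma \ref{lm:sensitivity} yields $\dpt(X^*,Y) \le C(\delta)\,\|\dH\|_2$. The quantitative Davis--Kahan / Stewart bound that underlies Lemma \ref{lm:sensitivity} takes the form $\|\sin\Theta\|_2 \le \|\dH\|_2/(\delta - \|\dH\|_2)$, and the expansion $1/(\delta - \|\dH\|_2) = 1/\delta + O(\|\dH\|_2/\delta^2)$, together with $\|\dH\|_2 = O(\dpt(X^*,X))$, collapses the higher-order correction into an $O(\dpts(X^*,X))$ remainder. Chaining the two linear estimates then produces the advertised inequality, with the constant ``$2$'' in front of $\sqrt{n}(\|L^\dagger\|_2+\sigma)/\delta$ arising when one absorbs the effective-eigen-gap correction $\delta - \|\dH\|_2 \ge \delta/2$ into a single uniform constant valid throughout the neighborhood.

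\textbf{Main obstacle.} The content of the argument is largely a repackaging of the Hamiltonian Lipschitz estimate already used for Lemma \ref{lm:Red}, so the calculations are routine. The main difficulty is bookkeeping the constant: Lemma \ref{lm:sensitivity} only asserts that $C$ depends on $\delta$, so the proof must either invoke a sharp quantitative form of the $\sin\Theta$ theorem explicitly or absorb a neighborhood-dependent loss into the $O(\dpts(X^*,X))$ remainder. Once that quantitative form is fixed, combining it with the Lipschitz bound on $\|\dH\|_2$ immediately gives the stated inequality.
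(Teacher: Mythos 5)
Your proposal is correct, but it takes a genuinely different route from the paper. The paper's proof never touches Davis--Kahan at the quantitative level: it Taylor-expands $\nabla E(X^*)=H(X^*)X^*$ around $X$, uses the two eigen-relations $H(X^*)X^*=X^*(X^*)\zz H(X^*)X^*$ and $H(X)Y=YY\zz H(X)Y$ to cancel the leading terms, and arrives at a Sylvester-type equation $\Lb_{n-k}(Z^*)\zz\dY-(Z^*)\zz\dY\Lb_k=-(Z^*)\zz B(X)[Z^*(Z^*)\zz\dX]+O(\|\dX\|_2^2)$, from which the gap $\delta$ and the bound \eqref{eq:secondpartdef2} on the discarded Hessian block $B(X)$ give the constant $2\sqrt{n}(\|L^\dagger\|_2+\sigma)/\delta$. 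Your argument instead composes a Lipschitz estimate $\|H(X)-H(X^*)\|_2\le\sqrt{n}(\|L^\dagger\|_2+\sigma)\,\dpt(X^*,X)$ (which is the same chain of inequalities as in Lemma \ref{lm:Red}, and is correct) with a quantitative $\sin\Theta$ bound $\dpt(X^*,Y)\le\|\dH\|_2/(\delta-\|\dH\|_2)$, the denominator coming from Weyl's inequality. This is shorter, sidesteps the paper's slightly delicate passage from $\|\dX\|_2$ back to $\dpt(X^*,X)$, and in fact yields the sharper constant $\sqrt{n}(\|L^\dagger\|_2+\sigma)/\delta$ once $1/(\delta-\|\dH\|_2)$ is expanded and the correction absorbed into $O(\dpts(X^*,X))$; the factor $2$ in the theorem is then just slack. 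What it does not buy is the paper's conceptual point: the Sylvester-equation derivation exhibits the local contraction rate as (norm of the neglected Hessian term $B$)/(eigenvalue gap), which is the thesis the authors are advancing, whereas your route attributes it to the Lipschitz constant of $X\mapsto H(X)$ --- the same first-order quantity seen from the other side. You correctly flag the one genuine obligation of your approach: Lemma \ref{lm:sensitivity} as stated only provides an unspecified $C(\delta)$, so you must replace it by the explicit Stewart/Davis--Kahan form; that form is standard and the paper's own references (Stewart, Golub--Van Loan) supply it.
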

\begin{proof}
For convenience of exposition, we introduce $\dX := X^*-X$ and $\dY := X^* - Y$.
Recalling the fact that $\nabla E(X) = H(X)X$, we obtain the first-order Taylor expansion of
$\nabla E(X^*)$ at $X$ as follows,
\bea\label{eq:dXdY1}
H(X^*)X^* = \nabla E(X^*) &=&  \nabla E(X) + \nabla^2 E(X)[\dX] + O(\|\dX\|_2^2)\nonumber\\
&=& H(X)X + H(X)\dX + B(X)[\dX] + O(\|\dX\|_2^2)\nonumber\\
&=& H(X)Y + H(X)\dY + B(X)[\dX] + O(\|\dX\|_2^2).
\eea
 Using Lemma \ref{order} and substituting $X^*$ by $Y+\dY$, we have
\bea\label{eq:dXdY2}
X^*(X^*)\zz H(X^*)X^*
&=& (Y+\dY)(Y+\dY)\zz(H(X)Y + H(X)\dY + B(X)[\dX] + O(\|\dX\|_2^2)) \nonumber\\
&=& YY\zz H(X)Y + Y\dY\zz H(X)Y + \dY Y\zz H(X)Y  \nonumber\\
&& + YY\zz H(X)\dY + YY\zz B(X)[\dX] + O(\|\dX\|_2^2).
\eea
By using the fact that $X^*$ is a global solution of \eqref{eq:KS} and $Y$ is
an SCF iterate obtained from $X$, we have
\begin{eqnarray}\label{eq:First1}
H(X^*)X^*&=& X^*(X^*)\zz H(X^*)X^*,\\
\label{eq:First2}
H(X)Y&=& YY\zz H(X)Y.
\end{eqnarray}
It follows from the relations \eqref{eq:dXdY1}-\eqref{eq:First2} that
\begin{eqnarray}\label{eq:final-rel1-1}
&& H(X)\dY -  ( Y\dY\zz H(X)Y + \dY Y\zz H(X)Y +
YY\zz H(X)\dY ) \nonumber\\
&=& - (I-YY\zz )B(X)[\dX] + O(\|\dX\|_2^2).
\end{eqnarray}
Consequently, the above relation and Lemma \ref{order} imply that 
\begin{eqnarray}\label{eq:final-rel1-2}
&& H(X^*)\dY -  ( X^*\dY\zz H(X)Y + \dY (X^*)\zz H(X^*)X^* +
X^*Y\zz H(X)\dY ) \nonumber\\
&=& - (I-X^*(X^*)\zz )B(X)[\dX] + O(\|\dX\|_2^2).
\end{eqnarray}
Let $Z^*$ be the orthogonal complement to $X^*$.
Multiplying both sides of \eqref{eq:final-rel1-2} with $(Z^*)^\top$ yields:
\begin{eqnarray}\label{eq:final-rel1}
&& (Z^*)\zz  H(X^*)\dY -  (Z^*)\zz( X^*\dY\zz H(X)Y + \dY (X^*)\zz H(X^*)X^* +
X^*Y\zz H(X)\dY ) \nonumber\\
&=& - (Z^*)\zz B(X)[\dX] + (Z^*)\zz X^*(X^*)\zz B(X)[\dX] + O(\|\dX\|_2^2),
\end{eqnarray}
which can be rewritten as
\begin{eqnarray}\label{eq:final-rel2}
(Z^*)\zz  H(X^*)\dY -  (Z^*)\zz \dY  (X^*)\zz H(X^*)X^* = - (Z^*)\zz B(X)[\dX] + O(\|\dX\|_2^2).
\end{eqnarray}

Let $\Lb_k$ and $\Lb_{n-k}$ be the diagonal matrices consisting of the $k$ smallest and
$n-k$ largest eigenvalues of $H(X^*)$, respectively.
It follows from \eqref{eq:First1} and the definition
of $Z^*$ that
\begin{eqnarray}\label{eq:final-rel3}
\Lb_{n-k}(Z^*)\zz\dY -  (Z^*)\zz \dY \Lb_k =
- (Z^*)\zz B(X)[(Z^*(Z^*)\zz+X^*(X^*)\zz)\dX] + O(\|\dX\|_2^2).
\end{eqnarray}
 By using the orthogonality of $X$, we have $(X^*-\dX)\zz (X^*-\dX) =X\zz X = I$,
which further gives,
\begin{eqnarray}\label{eq:dxxorder}
(X^*)\zz\dX =  O(\|\dX\|^2).
\end{eqnarray}
It follows from \eqref{eq:dxxorder} that
\begin{eqnarray}\label{eq:final-rel3-1}
\Lb_{n-k}(Z^*)\zz\dY -  (Z^*)\zz \dY \Lb_k =
- (Z^*)\zz B(X)[Z^*(Z^*)\zz\dX] + O(\|\dX\|_2^2).
\end{eqnarray}
Taking Frobenius-norm on both sides of \eqref{eq:final-rel3-1},
we have
\begin{eqnarray}\label{eq:final-rel4}
\|\Lb_{n-k} (Z^*)\zz\dY\|\ff - \| (Z^*)\zz \dY \Lb_k\|\ff \leq
 \| (Z^*)\zz B(X)[Z^*(Z^*)\zz\dX]\|\ff  + O(\|\dX\|_2^2).
\end{eqnarray}
Condition \ref{cond:UWP} implies
\begin{eqnarray}\label{eq:final-rel4-1}
\|\Lb_{n-k} (Z^*)\zz\dY\|\ff - \| (Z^*)\zz \dY \Lb_k\|\ff \geq \delta \|(Z^*)\zz\dY\|\ff.
\end{eqnarray}
By using Lemma \ref{lm:sndorder} and substituting \eqref{eq:final-rel4-1} into \eqref{eq:final-rel4},
we obtain
\begin{eqnarray}\label{eq:final-rel5}
\delta \|(Z^*)\zz\dY\|\ff  \leq
2 \sqrt{n} \|J\|_2\cdot \|(Z^*)\zz\dX \|_2 + O(\|\dX\|_2^2).
\end{eqnarray}
It is clear that $\dpt(X^*,Y) =\|(Z^*)\zz\dY\|_2\leq \|(Z^*)\zz\dY\|\ff$ and $\dpt(X^*,X) =\|(Z^*)\zz\dX\|_2$.
Recalling \eqref{eq:dxxorder} and the definition of $Z^*$,
we obtain
\begin{eqnarray}
||\dX||_2\geq \|(Z^*)\zz\dX\|_2 \geq ||\dX||_2 - ||(X^*)\zz\dX||_2 = ||\dX||_2 - O(||\dX||_2^2).
\end{eqnarray}
Namely,
 $O(\|\dX\|_2) = O(\dpt(X^*,X))$ holds, which completes the proof.
\end{proof}

Hence, when $2 \sqrt{n} (\|L^\dagger\|_2+ \sigma)<\delta$ holds, Theorem
\ref{thm:local} implies that the SCF iteration converges linearly to the
solution $X^*$ of the KS equation once the sequence locates in a sufficiently small
neighborhood of $X^*$.

\section{Comparison with the Results of Yang et al. in \cite{YangGaoMeza2009}}
\label{sec:comparison}
In this section, we explain the difference between our convergence results and these of
Yang et al. \cite{YangGaoMeza2009} on a special form of the total energy
functional as
\[ E(X) := \frac{1}{2} \tr(X\zz L X) +\frac{\alpha}{4}\rho(X)\zz L\inv \rho(X),  \]
whose Hamiltonian is
\[ H(X) := L +\alpha \Diag(L\inv \rho(X)). \]
Since there is no exchange correlation energy functional in this case, the
constant $\sigma=0$ in Condition \ref{cond:exc}.

Theorem \ref{thm:global} provides global convergence from any initial point if
\begin{equation}\label{bound_alpha-global}
\alpha \; < \; \alpha_G \; := \; \frac{\delta}{12k\sqrt{n}\|L^{-1}\|_2}.
\end{equation}
According to Theorem \ref{thm:local}, the SCF iteration converges linearly to
the optimal solution from an initial
 point located in a neighborhood of that solution,
 if $\alpha$ satisfies
\begin{equation}\label{bound_alpha-local}
\alpha \; < \; \alpha_L \; := \; \frac{\delta}{2\sqrt{n}\|L^{-1}\|_2}.
\end{equation}

On the other hand, Yang et al.  \cite{YangGaoMeza2009} proves convergence of a
variant of the SCF iteration
 whose the density function is computed by
\[  \rho= \diag(f_\mu(H)). \]
Here $f_{\mu}(t) := \frac{1}{1+e^{\beta(t-\mu)}}$  and $f_{\mu}(H) := V
\Diag(f_\mu(\lambda_1), \ldots, f_\mu(\lambda_n))V\zz$, where
$H=V\Diag(\lambda_1,\ldots,\lambda_n)V\zz$ is the eigenvalue decomposition
of $H$.
They provide global linear
convergence if
\begin{equation}\label{bound_alphaF}
\alpha\;< \; \alpha_F \; := \; \frac{2}{n^4\beta\|L^{-1}\|_1},
\end{equation}
where $\beta$ and $\mu$ satisfy
 \[ \mathrm{trace}(f_\mu(H)) = k. \]
For a given constant $\gamma \ll 1$, the smoothing can be achieved by requiring
\[\label{bond}
\begin{cases}
\frac{1}{1+e^{\beta(\lambda_k-\mu)}}\geq 1-\gamma,\\
\frac{1}{1+e^{\beta(\lambda_{k+1}-\mu)}}\leq \gamma,
\end{cases}
\]
which is equivalent to
\[\label{beta1}
\beta\;\ge \; \max\left\{\frac{\ln \frac{1-\gamma}{\gamma}}{\mu-\lambda_k}, \frac{\ln \frac{1-\gamma}{\gamma}}{\lambda_{k+1}-\mu}\right\}.
\]
Notice that
\[
\min\limits_{\mu}\max\left\{\frac{\ln \frac{1-\gamma}{\gamma}}{\mu-\lambda_k}, \frac{\ln \frac{1-\gamma}{\gamma}}{\lambda_{k+1}-\mu}\right\} = \frac{2}{\delta}\cdot \ln \frac{1-\gamma}{\gamma},
\]
whose minimum is achieved at $\mu=\frac{\lambda_k + \lambda_{k+1}}{2}$.
Therefore, we obtain
$\beta\geq \frac{2}{\delta}\cdot\ln\frac{1-\gamma}{\gamma}$.  Namely,
\begin{eqnarray}\label{bound_alpha3}
\alpha_F < \frac{\delta}{\ln\frac{1-\gamma}{\gamma}\cdot n^4\|L^{-1}\|_1}.
\end{eqnarray}

We notice that
$k\sqrt{n} < n^{1.5} < n^4$ and $k\sqrt{n} \ll n^4$ when $n$ is sufficiently large.
Moreover, $\ln\frac{1-\gamma}{\gamma} > 12$  if $\gamma < 6.1442
\times 10^{-6}$, whereas   $\ln\frac{1-\gamma}{\gamma}\cdot n^4 > 12 k\sqrt{n}$, when $\gamma < 0.1070$ and $n\geq 2$.
 By comparing \eqref{bound_alpha3} to \eqref{bound_alpha-global}, we can obtain that
$\alpha_F<\alpha_G$ under a reasonable  value of $\gamma$.
Furthermore, $\alpha_F\ll\alpha_G$ holds when $n$ is sufficiently large. Hence, we can conclude that
our condition  is no more restricted than the one in \cite{YangGaoMeza2009}.

\section{Conclusion}
We study the convergence issues of the well-known self-consistent
field (SCF) iteration for
solving the Kohn-Sham equation in density functional theory.  Our analysis is based on the
second-order Taylor expansion of the total energy functional. We show that a
``complicate'' part of the Hessian plays an important role in ensuring the convergence of
the SCF iteration.
 Both global and local convergence can be guaranteed if the gap between the $k$th and $(k+1)$th eigenvalues of the
 Hamiltonian $H(X)$ outweighs the norm of the complicate term in the
 Hessian up to some constant and if the second-order derivatives of
 the exchange correlation energy is uniformly bounded from above.

  Although our conditions are restrictive for the convergence of
  the SCF iteration and they are almost never satisfied in reality, they still provide us some insights on the performance of
  the algorithm. Recently, numerical evidences show that the exact Hessian can
  speed up the convergence of the SCF iteration in the trust-region framework
  \cite{Wen2013}. Our analysis  has not covered the acceleration scheme using charge
  mixing since it is a fixed-point algorithm in terms of the charge density
  rather than minimizing the total energy functional.

\vspace{0.1cm}
\textbf{Acknowledgements} Z. Wen would like to thank Humboldt
 Foundation for the generous support, Prof. Michael Ulbrich for
hosting his visit at Technische Universit\"at
M\"unchen. The authors would like to thank Dr. Chao Yang for discussion on the
Kohn-Sham equation and are grateful
to  two anonymous referees
for their detailed and valuable comments and suggestions.


\begin{thebibliography}{10}

\bibitem{LeBris2005}
{\sc C.~L. Bris}, {\em Computational chemistry from the perspective of
  numerical analysis}, Acta Numer., 14 (2005), pp.~363--444.

\bibitem{Cances2000}
{\sc E.~Canc\`es}, {\em Scf algorithms for hartree-fock electronic
  calculations}, Lecture Notes in Chemistry, 74 (2000), pp.~17--43.

\bibitem{Cances2001}
\leavevmode\vrule height 2pt depth -1.6pt width 23pt, {\em Self-consistent
  field algorithms for kohn–sham models with fractional occupation numbers},
  Journal of Chemical Physics, 114(24) (2001), p.~10616–10622.

\bibitem{Cancesetal2000}
{\sc E.~Canc\`es and C.~L. Bris}, {\em Can we outperform the diis approach for
  electronic structure calculations?}, International Journal of Quantum
  Chemistry, 79(2) (2000), pp.~82--90.

\bibitem{CancesLeBris2000}
{\sc E.~Canc\`es and C.~L. Bris}, {\em On the convergence of {SCF} algorithms
  for the {Hartree-Fock} equations}, Math. Model. Numer. Anal., 34 (2000),
  pp.~749--774.

\bibitem{Cancesetal2003}
{\sc E.~Canc\`es, M.~Defranceschi, W.~Kutzelnigg, C.~L. Bris, and Y.~Maday},
  {\em Handbook of numerical analysis. Volume X: special volume: computational
  chemistry}, North-Holland, 2003, ch.~Computational quantum chemistry: a
  primer, pp.~3--270.

\bibitem{Cancesetal2008}
{\sc E.~Canc\`es and K.~Pernal}, {\em Projected gradient algorithms for
  hartree-fock and density matrix functional theory calculations}, Journal of
  Chemical Physics, 128(13) (2008), pp.~108--134.

\bibitem{EdelmanAriasSmith1998}
{\sc A.~Edelman, T.~Arias, and S.~Smith}, {\em The geometry of algorithms with
  orthogonality constraints}, SIAM J. Matrix Analysis Applications, 20(2)
  (1998), pp.~303--353.

\bibitem{GaoYangMeza2009}
{\sc W.~Gao, C.~Yang, and J.~Meza}, {\em Solving a class of nonlinear
  eigenvalue problems by {Newton's} method}, tech. rep., Lawrence Berkeley
  National Laboratory, 2009.

\bibitem{Golub1996}
{\sc G.~Golub and C.~V. Loan}, {\em Matrix Computaion}, The Johns and Hopkins
  University Press, 1996.

\bibitem{Kerker1981}
{\sc G.~P. Kerker}, {\em Efficient iteration scheme for self-consistent
  pseudopotential calculations}, Phys. Rev. B, 23 (1981), pp.~3082--3084.

\bibitem{KoutechyBonacic1971}
{\sc J.~Kouteck\'y and V.~Bonacic}, {\em On the convergence difficulties in the
  iterative {Hartree-Fock} procedure}, J. Chem. Phys., 55 (1971),
  pp.~2408--2413.

\bibitem{Kresse-1996}
{\sc G.~Kresse and J.~Furthmuller}, {\em Efficiency of ab-initio total energy
  calculations for metals and semiconductors using a plane-wave basis set},
  Computational Materials Science, 6 (1996), pp.~15--50.

\bibitem{Kudinetal2002}
{\sc K.~N. Kudin, G.~E. Scuseria, and E.~Canc\`es}, {\em A black-box
  self-consistent field convergence algorithm: One step closer}, Journal of
  Chemical Physics, 116(19) (2002), pp.~8255--8261.

\bibitem{Levitt2012}
{\sc A.~Levitt}, {\em Convergence of gradient-based algorithms for the
  hartree-fock equations}, ESAIM: Mathematical Modelling and Numerical
  Analysis, 46(6) (2012), pp.~1321--1336.

\bibitem{Martin2004}
{\sc R.~M. Martin}, {\em Electronic Structure: Basic Theory and Practical
  Methods}, Cambridge University Press, 2004.

\bibitem{Schneideretall2009}
{\sc R.~Schneider, T.~Rohwedder, A.~Neelov, Johannes, and Blauert}, {\em Direct
  minimization for calculating invariant subspaces in density functional
  computations of the electronic structure}, Journal of Computational
  Mathematics, 27(2/3) (2009), pp.~360--393.

\bibitem{Stewart1973}
{\sc G.~W. Stewart}, {\em Error bounds for approximation invariant subspace of
  closed linear operators}, SIAM Review, 15 (1973), pp.~27--64.

\bibitem{SzaboOstlund1996}
{\sc A.~Szabo and N.~S. Ostlund}, {\em Modern Quantum Chemistry: An
  Introduction to Advanced Electronic Structure Theory}, Dover, New York, 1996.

\bibitem{Wen2013}
{\sc Z.~Wen, A.~Milzarek, M.~Ulbrich, and H.~Zhang}, {\em Adaptive regularized
  self-consistent field iteration with exact hessian for electronic structure
  calculation}, SIAM Journal on Scientific Computing, 35(3) (2013),
  pp.~A1299--A1324.

\bibitem{YangGaoMeza2009}
{\sc C.~Yang, W.~Gao, and J.~Meza}, {\em On the convergence of the
  self-consistent field iteration for a class of nonlinear eigenvalue
  problems}, SIAM J. Matrix Analysis Applications, 30(4) (2009),
  pp.~1773--1788.

\bibitem{YangMezaLeeWang2009}
{\sc C.~Yang, J.~C. Meza, B.~Lee, and L.-W. Wang}, {\em {\rm KSSOLV}---a
  {MATLAB} toolbox for solving the {Kohn-Sham} equations}, ACM Trans. Math.
  Softw., 36 (2009), pp.~1--35.

\bibitem{YangMezaWang2007}
{\sc C.~Yang, J.~C. Meza, and L.~Wang}, {\em A trust region direct constrained
  minimization algorithm for the {Kohn-Sham} equation}, SIAM Journal of
  Scientific Computing, 29 (2007), pp.~1854--1875.

\end{thebibliography}
\end{document}